\documentclass[aps,prl,letterpaper,superscriptaddress,twocolumn]{revtex4}

\usepackage{color}
\usepackage{amssymb,bm}
\usepackage{amsthm}
\usepackage{amsmath}
\usepackage{graphicx}
\usepackage{times}
\usepackage{algorithm}
\floatname{algorithm}{Protocol}

\usepackage[T1]{fontenc}
\usepackage[utf8]{inputenc}
\usepackage{graphicx}
\usepackage[colorlinks]{hyperref}

\def\tr{\mbox{Tr}}
\def\be{\begin{equation}}
\def\ee{\end{equation}}

\newcommand{\bra}[1]{\left\langle{#1}\right\vert}
\newcommand{\ket}[1]{\left\vert{#1}\right\rangle}

\theoremstyle{definition}

\theoremstyle{theorem}
\newtheorem{theorem}{Theorem}
\newtheorem{lemma}[theorem]{Lemma}

\theoremstyle{definition}

\usepackage{color}

\def\>{\rangle} 
\def\<{\langle}  

\begin{document}

\title{Causal Limit on Quantum Communication}

\author{Robert Pisarczyk} 
\email{robert.pisarczyk@maths.ox.ac.uk}

\address{Mathematical Institute, University of Oxford, Woodstock Road, Oxford OX2 6GG, U.K.} 
\address{Centre for Quantum Technologies, National University of Singapore, 3 Science Drive 2, 117543, Singapore}

\author{Zhikuan Zhao}
\email{zhikuan.zhao@inf.ethz.ch}
\affiliation{Department of Computer Science, ETH Zürich, Universitätstrasse 6, 8092 Zürich}
\affiliation{Centre for Quantum Technologies, National University of Singapore, 3 Science Drive 2, 117543, Singapore}
\affiliation{Singapore University of Technology and Design, 8 Somapah Road, 487372, Singapore}

\author{Yingkai Ouyang}
\email{y.ouyang@sheffield.ac.uk}
\affiliation{University of Sheffield, Department of Physics and Astronomy, 226 Hounsfield Rd, Sheffield S3 7RH, U.K.}
\affiliation{Centre for Quantum Technologies, National University of Singapore, 3 Science Drive 2, 117543, Singapore}
\affiliation{Singapore University of Technology and Design, 8 Somapah Road, 487372, Singapore}

\author{Vlatko Vedral}
\email{vlatko.vedral@gmail.com}
\affiliation{Centre for Quantum Technologies, National University of Singapore, 3 Science Drive 2, 117543, Singapore}
\affiliation{Clarendon Laboratory, Department of Physics, University of Oxford, Parks Road, Oxford OX1 3PU, U.K.}
\affiliation{Department of Physics, National University of Singapore, 2 Science Drive 3, 117542, Singapore}

\author{Joseph F. Fitzsimons}
\email{joe.fitzsimons@nus.edu.sg}
\affiliation{Centre for Quantum Technologies, National University of Singapore, 3 Science Drive 2, 117543, Singapore}
\affiliation{Singapore University of Technology and Design, 8 Somapah Road, 487372, Singapore}
\affiliation{Horizon Quantum Computing, 79 Ayer Rajah Crescent, Singapore 139955}

\begin{abstract}
The capacity of a channel is known to be equivalent to the highest rate at which it can generate entanglement. Analogous to entanglement, the notion of a causality measure characterises the temporal aspect of quantum correlations. Despite holding an equally fundamental role in physics, temporal quantum correlations have yet to find their operational significance in quantum communication. Here we uncover a connection between quantum causality and channel capacity. We show the amount of temporal correlations between two ends of the noisy quantum channel, as quantified by a causality measure, implies a general upper bound on its channel capacity. The expression of this new bound is simpler to evaluate than most previously known bounds. We demonstrate the utility of this bound by applying it to a class of shifted depolarizing channels, which results in improvement over previously calculated bounds for this class of channels.
\end{abstract}

\date{\today}
\maketitle

\textit{Introduction.--}Determining the rate at which information can be reliably transmitted over a given channel is one of the central tasks of information theory. In a classical setting, Shannon \cite{shannon1948mathematical} proved that the capacity of discrete memoryless channels are governed by a simple expression. In a quantum setting, however, such a characterisation of a channels' ability to transmit information has proved far more elusive. In determining the capacity of a quantum channel, $\mathcal{N}$, we have to consider the possibility that in order to achieve the maximal capacity per use of the channel it may be necessary to encode information in states which are entangled across channels. Thus, to determine the actual capacity of a quantum channel, one needs to take the supremum of this quantity over tensor products of an arbitrary number of copies of the channel. In the context of quantum communication, a significant amount of progress has been made on achievable rates for the transmission of quantum information over noisy channels  \cite{holevo2001evaluating,lloyd1997capacity, shor2002quantum, devetak2005private}. However existing formulae for quantum capacities often involve implicit optimisation problems. In the absence of formulae for the exact capacities, one is forced to rely on bounds for the quantum capacity that are tractable to evaluate \cite{takeoka2014squashed, muller2016positivity, wang2016semidefinite, sutter2015approximate,pirandola2017fundamental, pirandola2019fundamental, wang2019semidefinite, tomamichel2017strong, berta2018amortization}. The reader is referred to \cite{wilde2013quantum, hayashi2006quantum} for a review of related results.

The quantum capacity is also known to be equivalent to the highest rate at which the channel can be used to generate quantum entanglement, the essential nonclassical signature in composite quantum systems \cite{wilde2013quantum}. While the conceptual link between channel capacity and spatial quantum correlations has become increasingly clear, the operational role of temporal correlations in quantum communication remains to be clearly depicted. Powerful existing frameworks such as the process matrices \cite{brukner2014quantum, oreshkov2012quantum} have enabled novel results in a setting where the causal order in a communication task is indefinite \cite{chiribella2018indefinite, salek2018quantum, jia2019causal} while the framework of quantum causal models has been employed to study cause-effect and temporal relations between quantum systems \cite{allen2017quantum, barrett2019quantum}. Here we work in the conventional setting of one-way quantum communication and integrate causal considerations into the traditional framework of quantum Shannon theory. Specifically, we view a quantum communication process through a noisy channel as a generalised quantum state that is extended across time. Taking this viewpoint intuitively connects the channel's quantum capacity with its ability to preserve causal correlations between the input and output.

In this letter, we present novel general upper bounds on the quantum capacities of quantum channels that do not require optimisation and are based on causality considerations derived using a pseudo-density matrix (PDM) formalism introduced in \cite{fitzsimons2015quantum}, with the bound also expressible in terms of the Choi matrix of a channel \cite{choi1975completely}. A PDM is a generalization of the standard density matrix which seeks to capture both spatial and temporal correlations. In quantum mechanics, a density matrix is a probability distribution over pure quantum states but it can alternatively be viewed as a representation of the expectation values for each possible Pauli measurement on the system. For a system composed of multiple spatially separated subsystems, each Pauli operator can be expanded as a tensor product of single-qubit Pauli operators, with one acting on each subsystem. PDMs build on this second view of the standard density matrix, extending the notion of the density matrix into the time domain. The resulting pseudo-density matrix is defined as 
\[
R = \frac{1}{2^n} \sum_{i_1 = 0}^{3} ... \sum_{i_n = 0}^{3} \left \langle\{\sigma_{i_j}\}_{j=1}^n \right \rangle \bigotimes_{j=1}^n \sigma_{i_j},
\]
where $\left \langle\{\sigma_{i_j}\}_{j=1}^n \right \rangle$ is the expectation value for the product of a set of Pauli measurements. Unlike in the standard density matrix, we do not require the measurements act only on distinct spatially separated subsystems.
Rather each measurement can be associated with an instant in time and a particular subsystem, and is taken to project the state of the system onto the eigenspace of the measured observable corresponding to the measurement outcome. We also note that although the PDM is introduced with respect to the set of qubits, it can describe a quantum system of any dimensionality. One needs to embed such a system into a state of qubits and restrict its evolution to the appropriate subspace.

\textit{Causality monotone.--}The generalization of states to systems extended across multiple points in time has the result that, unlike density matrices, PDMs can have negative eigenvalues. As the PDM is equivalent to the standard density matrix when the measurements are restricted to a single moment in time, the existence of negative eigenvalues in the PDM acts as a witness to temporal correlations in the measurement events. In order to quantify the causal component of such correlations, the notion of a causality monotone was introduced in \cite{fitzsimons2015quantum}. We now introduce a function based on the logarithm of the trace norm of the PDM, $F(R) = \log_2 \|R\|_1$, which is similar to causality monotones, but sacrifices convexity in favour of additivity when applied to tensor products. This is similar to logarithmic negativity \cite{plenio2005logarithmic} in the context of spatial correlations. Analogous to entanglement measures \cite{vedral1998entanglement},  $F(R)$ satisfies the following important properties:
\begin{enumerate}
\item $F(R) \geq 0$, with $F(R)=0$ if $R$ is positive semi-definite, and $F(R_2) = 1$ for $R_2$ obtained from two consecutive measurements on a single qubit closed system,
\item $F(R)$ is invariant under a local change of basis,
\item $F(R)$ is non-increasing under local operations,
\item $F(\sum_i p_i R_i) \leq \max_i F(R_i)$, for any probability distribution $\{p_i\}$, and
\item $F(R\otimes S) = F(R) + F(S)$.
\end{enumerate}
Properties 1-3 follow directly from the corresponding properties of the causality monotone $f_\text{tr}(R) = \|R\|_1 - 1$ proved in \cite{fitzsimons2015quantum}, since $F(R) = \log_2 (f_\text{tr}(R)+1)$, and from the monotonicity of the logarithm function. Property 4 also follows from the monotonicity of the logarithm function, since this implies 
$F(\sum_i p_i R_i) 
\leq 
\max_i F(R_i\sum_j p_j) 
= 
\max_i F(R_i)$. 
To prove property 5, we note that $\log_2 \|R\otimes S\|_1 = \log_2 \|R\|_1 \|S\|_1 = \log_2 \|R\|_1 + \log_2 \|S\|_1 $, and hence $F(R\otimes S) = F(R) + F(S)$.

\textit{Causality bound on quantum channel capacity.--} Evolution of any quantum state can be identified with a corresponding PDM. Consider a qubit-to-qubit channel $\mathcal N_1$ acting on a single qubit described by an initial state $\rho$. For such a process $R_{\mathcal N_1}$, a PDM that involves a single use of the channel $\mathcal N_1$ and two measurements before and after $\mathcal N_1$, has been shown to be given by
\begin{equation}
R_{\mathcal N_1}=	(\mathcal{I}\otimes \mathcal N_1) ( \{\rho\otimes\frac{\mathrm{I}}{2},\textrm{SWAP}\}), \label{single}
\end{equation}
where $\textrm{SWAP}=\frac{1}{2}\sum_{i=0}^3\sigma_i\otimes\sigma_i$ and $\{A,B\}=AB+BA$ \cite{horsman2017can, zhao2018geometry}. Here we fix the input $\rho$ to be a maximally mixed state. Then, equation \eqref{single} can be easily generalised to describe any quantum channel $\mathcal{N}$ acting on a collection of $l$ qubits
\begin{equation}
R_{\mathcal N}=(\mathcal{I}\otimes \mathcal{N})\left(\frac{\textrm{SWAP}^{\otimes l}}{2^l}\right). \label{SWAP}
\end{equation}
It is worth noting that choosing $\rho$ to be maximally mixed, the causality measure $F(R_{\mathcal{N}})$ gains a simple interpretation as it reduces to the logarithmic negativity of the Choi state of $\mathcal{N}$, which measures the amount of entanglement preserved in an initially maximally entangled two-qubit system after a subsystem is sent through the channel. As such the well-studied entanglement measure, negativity has an equally valid role in the temporal domain, in that it quantifies a channel's ability to preserve causal correlations. The relevance of the Choi state for causal structures in quantum mechanics has also been found in previous work \cite{allen2017quantum, barrett2019quantum}. Here we take the novel step to directly link the properties of a Choi state with the  quantum capacity of the respective channel.

Operationally, the quantum capacity of a quantum channel ${\mathcal N}$ is the maximum rate in which quantum information can be transmitted across $n$ independent uses of the quantum channel ${\mathcal N}$ with vanishing error as the number of uses $n$  approaches infinity. Therefore in order to relate the causality measure to the quantum channel capacity,  we employ equation \eqref{SWAP} and use the causality measure $F(R_{\mathcal N})$ to construct an upper bound on the number of uses of a given channel $\mathcal{N}$ to approximate the ideal (identity) channel $\mathcal I ^{\otimes k}$. As in the canonical setting of \cite{lloyd1997capacity} we wish to approximate $k$ copies of the identity channel as it corresponds precisely to the asymptotically perfect transmission of $k$ copies of a state. Since we consider only one-way communication in the memoryless setting, the most general procedure for combining resource channels together to approximate the ideal channel is to consider $n$ parallel uses of the channel preceded by some encoding and followed by some decoding procedure, as shown below in FIG.~\ref{fig:Enc}. We do not consider memory effects in our work. However, it would be interesting to extend our results to the capacities of quantum channels with memory in a future study \cite{kretschmann2005quantum}.
\begin{figure}[H]
	\centering
	\includegraphics[width=1.0\linewidth]{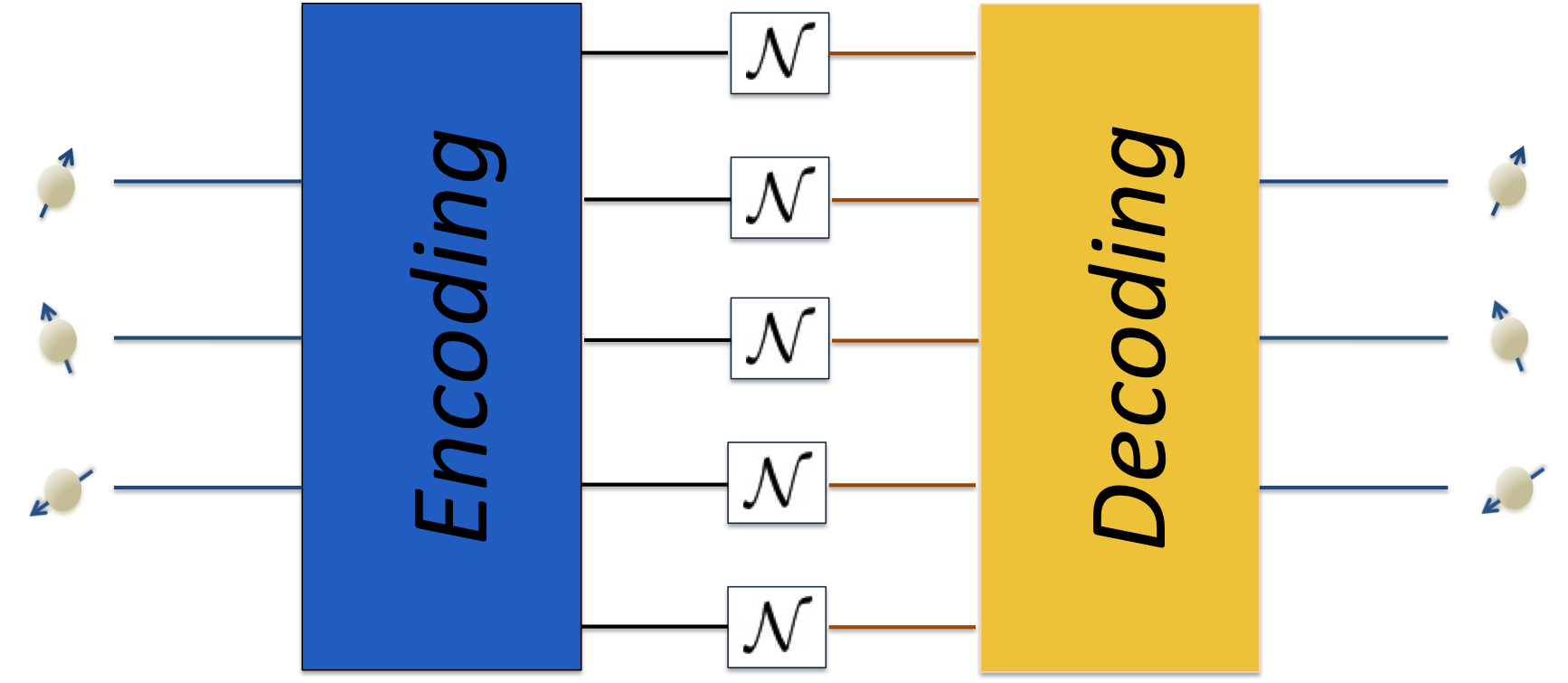}
	\caption{A quantum state of a collection of $k$ qubits is encoded into a larger Hilbert space. The encoded quantum information is sent through $n$ parallel copies of the resource channel $\mathcal N$ after which it decoded. In general, the dimensions of the input and the output of channel $\mathcal{N}$ need not be the same. As encoding and decoding are both physical processes, they are completely positive trace preserving maps.}
	\label{fig:Enc}
\end{figure}
We compare the causality measure across the collection of channels with the causality measure across the identity channel. As a result of property 3 of $F(R)$ and the fact that for quantum channel capacity consideration it suffices to consider isometric encodings \cite{barnum2000quantum}, the causality measure across the combined channels does not increase under encoding and decoding. We then exploit the additivity of causality measure to relate $k$ to the number of uses of the channel. In fact, the same properties of $F$ guarantee that even if we had allowed the encoding and decoding procedures to operate on entangled ancillary registers, the above relations would still hold, and hence the bounds we derive from this will also upper bound the entanglement-assisted capacities $\mathcal{N}$ \cite{bennett1999entanglement, bennett2002entanglement}. This leads to our main result that the quantum capacity $Q$ of channel $\mathcal N$ is upper bounded by $F(R_{\mathcal N})$,
\begin{equation}
Q(\mathcal N) \leq F(R_{\mathcal N}). \label{eq:result}
\end{equation}
The mathematical details for deriving this bound are presented in the section below.  Evaluating the causality measure $F(R_{\mathcal N})$ requires only finding the logarithm of the trace norm of a PDM and can be readily calculated for channels acting on relatively small Hilbert spaces. Importantly, computing this bound does not involve any optimisation. Furthermore, equation \eqref{eq:result} implies that any channel with $F(R_{\mathcal{N}}) = 0$ has quantum capacity equal to zero. This reflects the fact that such a channel exhibits correlations which could have been produced by measurements on distinct subsystems of a quantum state, and so the system is necessarily constrained by the no-signalling theorem. On the other hand, when $F(R_{\mathcal{N}})$ is strictly positive, the correlations between the two ends of the channel cannot be captured by bipartite density matrices, thus signifying information being passed forward in time. We emphasize that the bound has been derived for channels acting on the collection of qubits, nonetheless the result applies to channels with arbitrary input and output dimensions. For the method to apply to such cases, it suffices to embed the system into a  $2^k$ dimensional Hilbert space of qubits and
restrict the channel to act only on a subspace of this space.

\begin{figure}[h!]
	\centering
	\includegraphics[width=1.0\linewidth]{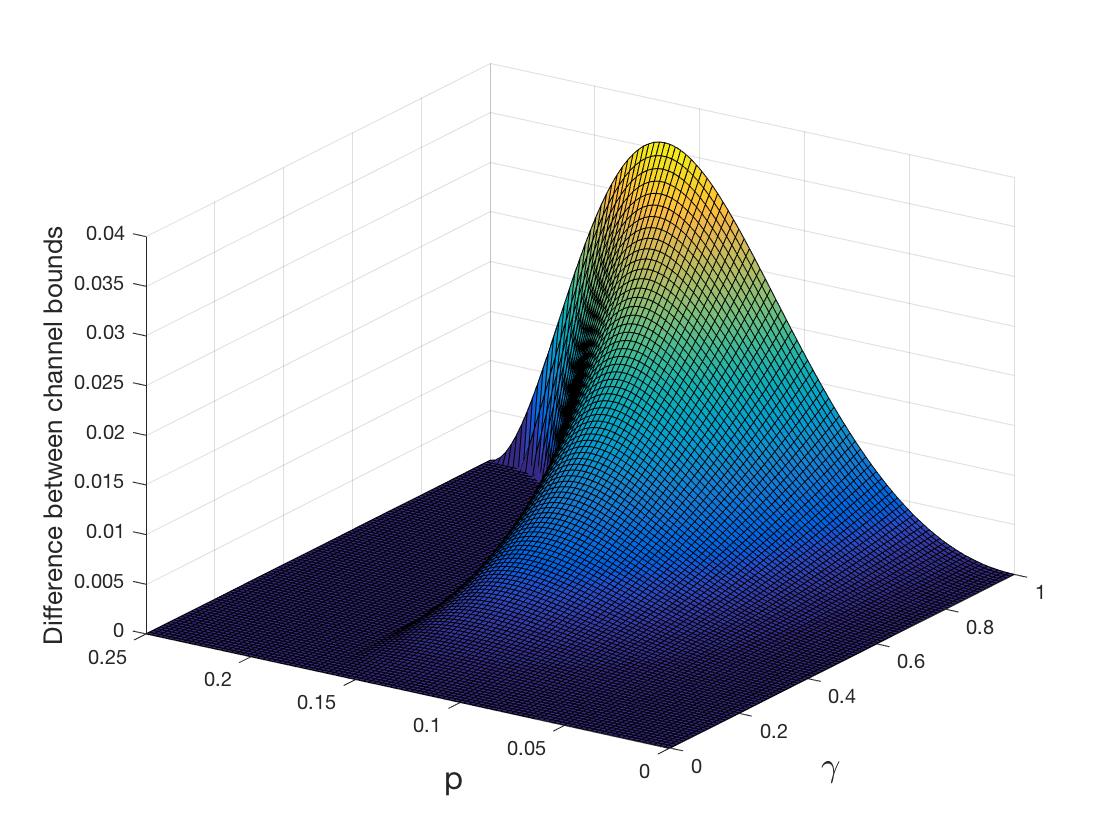}
	\caption{Difference between the HW and causality bounds on quantum channel capacity of a shifted depolarizing channel. Notice that the two bounds coincide when there is no shift (standard depolarizing channel) but the causality bound is tighter when the shift $\gamma$ increases.}
	\label{fig:HW}
\end{figure}
It is also interesting to note the apparent resemblance between the causality bound and the max-Rains information bound \cite{wang2019semidefinite} which is also expressible through properties of the Choi state. Indeed, in the Supplemental Material, we show that the max-Rains bound for a channel $\mathcal N$ is upper bounded by the causality bound for the conjugate channel ${\mathcal N}^*$. As a result, the max-Rains bound might often be a tighter bound. In contrast, the causality bound is not a semi-definite program, requires no optimisation and as such is analytically calculable \cite{wilde2018entanglement}. Furthermore, the max-Rains information provides a bound for the distillable entanglement of a channel, which is a related but distinct concept from the distillable entanglement of a state. The distillable entanglement of a state is known to be upper-bounded by logarithmic negativity while our bound relates logarithmic negativity to the distillable entanglement of channels.

\textit{Application of the bound.--}As a practical illustration of how the causality method works, we apply it to the class of shifted depolarizing channels. A shifted depolarizing channel generalises the well-studied quantum depolarizing channel \cite{king2003capacity,smith2008additive, leditzky2018useful}. It outputs either the state $\frac{{I}+\gamma Z}{2}$ shifted from the maximally mixed state with probability $4p$ or the input state. For a single qubit the channel can be defined by $\mathcal{N}_{\gamma} (\rho) = (1-4p)\rho + 4p\left(\frac{{I}+\gamma Z}{2} \right)$. The parameter $\gamma \in [0,1]$ parametrizes the shift, with vanishing $\gamma$ corresponding to a standard depolarizing channel. The PDM $R_{\mathcal{N}_{\gamma}}$ associated with the single qubit shifted depolarizing channel can be found using equation \eqref{SWAP} from which we obtain an analytic expression for the value of $F(R_{\mathcal{N}_{\gamma}})$, and hence an upper bound on the quantum capacity of the channel

\begin{align*}
\mathcal{Q}(\mathcal{N}_{\gamma}) &\leq F(R_{\mathcal{N}_{\gamma}}) \nonumber \\
&= \log_2 \bigg( 1-p + \frac{1}{2} \sqrt{1-8p+16 p^2+4\gamma^2 p^2} \nonumber\\
&\hspace{1.2cm}+\frac{1}{2}\left| 2p-\sqrt{1-8p+16p^2+4\gamma^2 p^2}\right|\bigg).
\end{align*}
We can compare this with a simple well-known bound on quantum capacities of Holevo and Werner (HW) which is general, and has a similar form to the causality bound, but requires optimisation \cite{holevo2001evaluating}. The causality bound is better or equal to the HW bound (see the Supplemental Material for a proof). As shown in FIG. \ref{fig:HW}, the shifted depolarising channel constitutes an example for which the causality bound is strictly tighter than the HW bound.
Furthermore, the bound $F(R_{\mathcal{N}_{\gamma}})$ also improves upon the best known bound from \cite{ouyang2014channel}. In fact, it is tighter for most values of shifts $\gamma$ as shown in FIG.~\ref{fig:Kai}. We should also note that the causality bound coincides with the max-Rains bound for the shifted depolarizing channel \cite{Wang}.

\begin{figure}[h!]
	\centering
	\includegraphics[width=1.0\linewidth]{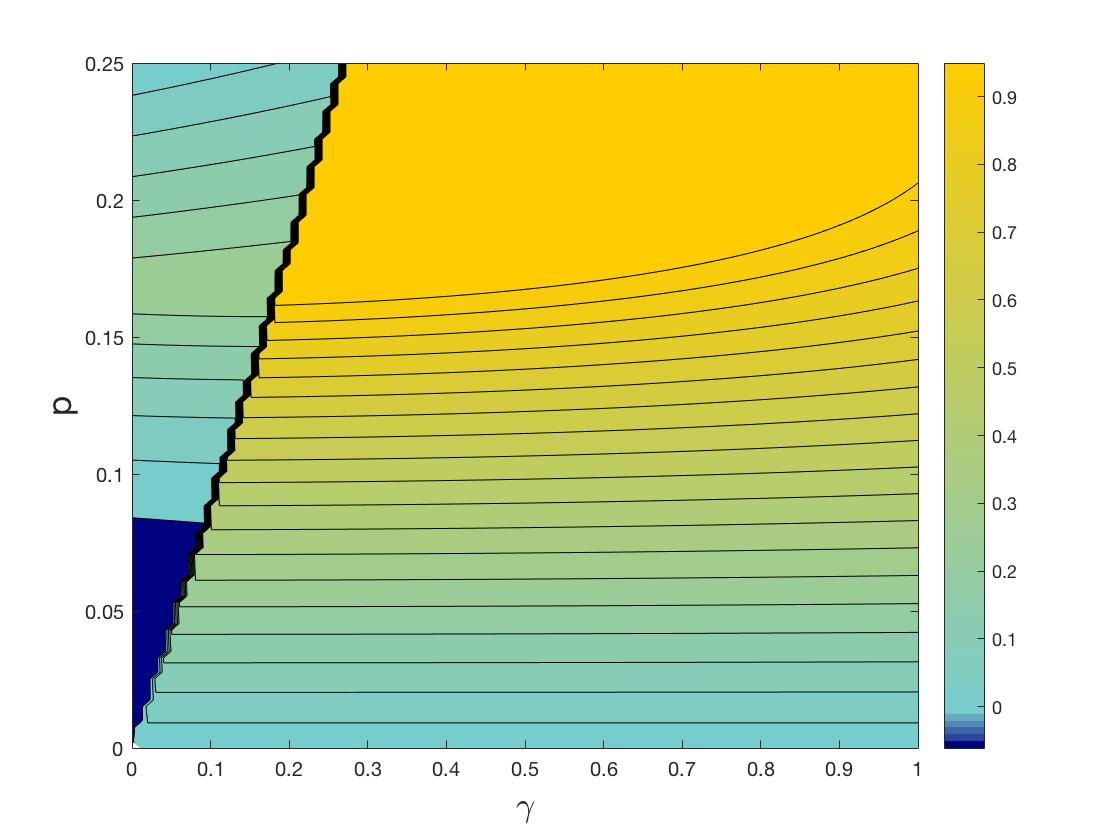}
	\caption{Difference between the previously known bound from \cite{ouyang2014channel} and the causality bound on quantum channel capacity of a shifted depolarizing channel. The causality bound is tighter for almost all values of $\gamma$ and $p$. Only in the region of small shift $\gamma$ and small probability $p$, which corresponds to the bottom left corner of the diagram, the causality bound is less tight.}
	\label{fig:Kai}
\end{figure}

\textit{Proof of the bound.--}In this section, we prove the bound in equation \eqref{eq:result}.
First, we construct the pseudo-density matrix corresponding to a channel obtained through using $n$ copies of the resource channel $\mathcal{N}$ preceded by the encoding channel $\mathcal{E}$ and followed the decoding channel $\mathcal{D}$. Let $\mathcal{M} = \mathcal{D} \circ \mathcal{N}^{\otimes n} \circ \mathcal{E}$. Note that 
\[R_\mathcal{M} = ({\mathcal I}^{\otimes k} \otimes \mathcal{M})(R_{\mathcal{I} ^{\otimes k}}).
\]
By the reverse triangle inequality,
\[
\| R_{\mathcal{M}}\| _1
=
\|R_{{\mathcal I ^{\otimes k}}}  +  R_{\mathcal{M}} -  R_{\mathcal I ^{\otimes k}}  \| _1
\ge
\|R_{{\mathcal I ^{\otimes k}}} \|_1 -  \|R_{\mathcal{M}} -  R_{\mathcal I ^{\otimes k}}  \|_1 .
\]
We can relate the trace distance of two pseudo-density matrices to the diamond norm in the following way:
\begin{align*}
\|R_{\mathcal{M}} - R_{{\mathcal I} ^{\otimes k}}\| _1
&= 
\|({\mathcal I} ^{\otimes k}\otimes 
(\mathcal{M} -  {\mathcal I}^{\otimes k}))(R_{\mathcal I ^{\otimes k}}) \|_1  \\
&\leq 
\|
\mathcal{M} -  {\mathcal I}^{\otimes k} \|_\diamond
\|R_{\mathcal I ^{\otimes k}}\|_1 ,
\end{align*}
where $\|\cdot\|_\diamond$ denotes the diamond norm \cite{kitaev2002classical}. Denoting the distance between $\mathcal M$ and $\mathcal I$ in the diamond norm by $\epsilon=  \|\mathcal{M} -  {\mathcal I}^{\otimes k} \|_{\diamond}$ and using the upper bound on $\|R_{\mathcal{M}} - R_{{\mathcal I} ^{\otimes k}}\|_1 $ as well as the positivity of
$\|R_{{\mathcal I ^{\otimes k}}}\|_1$,
we get
\[
\frac{ \| R_{\mathcal{M}}\|_1 }{\|R_{{\mathcal I ^{\otimes k}}} \|_1  } 
\ge
1 - \epsilon. 
\] 
Taking the logarithm on both sides of the above inequality, we find
\[
F(R_{\mathcal M}) - F(R_{\mathcal I ^{\otimes k}}) \ge \log_2(1-\epsilon) .
\] 
We can exploit the relation between the PDM and \textrm{SWAP} matrix, as well as the non-increasing property of the trace norm under the partial trace, to show that the causality measure does not increase under decoding and encoding. A detailed proof is presented in the Supplemental Material. This gives  $F(R_{\mathcal{M}}) \leq F(R_{\mathcal{N}}^{\otimes n})$.

Additivity of $F$ with respect to tensor products implies that $F(R_{\mathcal{N}}^{\otimes n}) = n F(R_{\mathcal N})$, and $F(R_{\mathcal I ^{\otimes k}})=k F(R_{\mathcal I})$.
Hence
\[
n F(R_{\mathcal N}) - k  F(R_{\mathcal I}) \ge \log_2(1-\epsilon) .
\]  
Finally, since $F(R_{\mathcal I})=l$, where $l$ is the number of qubits on which the channel acts, we have
\[
\frac{l k}{n} \le F(R_{\mathcal N})  - \frac{\log_2(1-\epsilon)}{n}.
\]
The diamond norm distance $\epsilon$ can be related to distance in the completely bounded infinity norm (see Supplemental Material for details, which includes Refs. \cite{fuchs1999cryptographic, Sch96, KrW04, paulsen2002completely}), which in turn guarantees $\epsilon$ goes to zero as $n$ approaches infinity. Therefore we obtain the bound $Q(\mathcal N) \le F(R_{\mathcal N}).$ \\

\textit{Conclusions and outlook.--}We have obtained a bound on quantum capacity using fundamental causality considerations. In doing so, we have introduced a new measure of temporal correlations that is analogous to entanglement logarithmic negativity and possesses desired properties that make it useful for studying channel capacities. Studies of spatial correlations have lead to the formulation of many entanglement monotones with different corresponding  applications and operational meanings e.g. distillable entanglement, entanglement cost, squashed entanglement  \cite{horodecki2009quantum, christandl2004squashed}. As a temporal counterpart of quantum correlations, our work initiates research on operational significance of causality measures that might prove useful in a wider range of applications. The causality method applies to arbitrary quantum channels and produces non-trivial upper bounds for any channel. However, in contrast to most other of such bounds, it does not require optimisation. Our result could help to understand the communication rate of complex systems for which optimisation methods are computationally too costly, including quantum networks and quantum communication between many parties \cite{leung2010quantum, hayashi2007quantum, laurenza2017general, pirandola2016capacities, pant2019routing}.

\textit{Acknowledgements--}We would like to thank Andreas Winter, Artur Ekert and Xin Wang for helpful discussions and Mark Wilde and Stefano Pirandola for useful comments on the manuscript. J.F.F. acknowledges support from the Air Force Office of Scientific Research under grant FA2386-15-1-4082. V.V. thanks the Leverhulme Trust, the Oxford Martin School, and Wolfson College, University of Oxford. 
YO also acknowledges support from EPSRC (Grant No. EP/M024261/1) and the QCDA project (EP/R043825/1) which has received funding from the QuantERA ERANET Cofund in Quantum Technologies implemented within the European Union's Horizon 2020 Programme. The authors acknowledge support from Singapore Ministry of Education. This material is based on research funded by the National Research Foundation of Singapore under NRF Award No. NRF-NRFF2013-01. R.P. and V.V. thank EPSRC (UK).\\

\textit{Competing financial interests--} J.F.F. has financial holdings in Horizon Quantum Computing
Pte. Ltd.
\bibliographystyle{unsrt}
\bibliography{caus}

\begin{thebibliography}{10}

\bibitem{shannon1948mathematical}
C.~E. Shannon.
\newblock A mathematical theory of communication.
\newblock {\em Bell Syst. Tech. J.}, 27:379--423 and 623--656, 1948.

\bibitem{holevo2001evaluating}
Alexander~S Holevo and Reinhard~F Werner.
\newblock Evaluating capacities of bosonic gaussian channels.
\newblock {\em Physical Review A}, 63(3):032312, 2001.

\bibitem{lloyd1997capacity}
Seth Lloyd.
\newblock Capacity of the noisy quantum channel.
\newblock {\em Physical Review A}, 55(3):1613, 1997.

\bibitem{shor2002quantum}
Peter~W Shor.
\newblock The quantum channel capacity and coherent information.
\newblock In {\em Lecture notes, MSRI Workshop on Quantum Computation}, 2002.

\bibitem{devetak2005private}
Igor Devetak.
\newblock The private classical capacity and quantum capacity of a quantum
  channel.
\newblock {\em Information Theory, IEEE Transactions on}, 51(1):44--55, 2005.

\bibitem{takeoka2014squashed}
Masahiro Takeoka, Saikat Guha, and Mark~M Wilde.
\newblock The squashed entanglement of a quantum channel.
\newblock {\em IEEE Transactions on Information Theory}, 60(8):4987--4998,
  2014.

\bibitem{muller2016positivity}
Alexander M{\"u}ller-Hermes, David Reeb, and Michael~M Wolf.
\newblock Positivity of linear maps under tensor powers.
\newblock {\em Journal of Mathematical Physics}, 57(1):015202, 2016.

\bibitem{wang2016semidefinite}
Xin Wang and Runyao Duan.
\newblock A semidefinite programming upper bound of quantum capacity.
\newblock In {\em 2016 IEEE International Symposium on Information Theory
  (ISIT)}, pages 1690--1694. IEEE, 2016.

\bibitem{sutter2015approximate}
David Sutter, Volkher~B Scholz, and Renato Renner.
\newblock Approximate degradable quantum channels.
\newblock {\em IEEE Transactions on Information Theory}, 63(12):7832--7844,
  2017.

\bibitem{pirandola2017fundamental}
Stefano Pirandola, Riccardo Laurenza, Carlo Ottaviani, and Leonardo Banchi.
\newblock Fundamental limits of repeaterless quantum communications.
\newblock {\em Nature communications}, 8:15043, 2017.

\bibitem{pirandola2019fundamental}
Stefano Pirandola, Riccardo Laurenza, Cosmo Lupo, and Jason~L Pereira.
\newblock Fundamental limits to quantum channel discrimination.
\newblock {\em npj Quantum Information}, 5(1):3, 2019.

\bibitem{wang2019semidefinite}
Xin Wang, Kun Fang, and Runyao Duan.
\newblock Semidefinite programming converse bounds for quantum communication.
\newblock {\em IEEE Transactions on Information Theory}, 65(4):2583--2592,
  2019.

\bibitem{tomamichel2017strong}
Marco Tomamichel, Mark~M Wilde, and Andreas Winter.
\newblock Strong converse rates for quantum communication.
\newblock {\em IEEE Transactions on Information Theory}, 63(1):715--727, 2017.

\bibitem{berta2018amortization}
Mario Berta and Mark~M Wilde.
\newblock Amortization does not enhance the max-rains information of a quantum
  channel.
\newblock {\em New Journal of Physics}, 20(5):053044, 2018.

\bibitem{wilde2013quantum}
Mark~M Wilde.
\newblock {\em Quantum information theory}.
\newblock Cambridge University Press, 2013.

\bibitem{hayashi2006quantum}
Masahito Hayashi.
\newblock {\em Quantum Information}.
\newblock Springer, 2006.

\bibitem{brukner2014quantum}
{\v{C}}aslav Brukner.
\newblock Quantum causality.
\newblock {\em Nature Physics}, 10(4):259, 2014.

\bibitem{oreshkov2012quantum}
Ognyan Oreshkov, Fabio Costa, and {\v{C}}aslav Brukner.
\newblock Quantum correlations with no causal order.
\newblock {\em Nature communications}, 3:1092, 2012.

\bibitem{chiribella2018indefinite}
Giulio Chiribella, Manik Banik, Some~Sankar Bhattacharya, Tamal Guha, Mir
  Alimuddin, Arup Roy, Sutapa Saha, Sristy Agrawal, and Guruprasad Kar.
\newblock Indefinite causal order enables perfect quantum communication with
  zero capacity channel.
\newblock {\em arXiv preprint arXiv:1810.10457}, 2018.

\bibitem{salek2018quantum}
Sina Salek, Daniel Ebler, and Giulio Chiribella.
\newblock Quantum communication in a superposition of causal orders.
\newblock {\em arXiv preprint arXiv:1809.06655}, 2018.

\bibitem{jia2019causal}
Ding Jia, Fabio Costa, et~al.
\newblock Causal order as a resource for quantum communication.
\newblock {\em Physical Review A}, 100(5):052319, 2019.

\bibitem{allen2017quantum}
John-Mark~A Allen, Jonathan Barrett, Dominic~C Horsman, Ciar{\'a}n~M Lee, and
  Robert~W Spekkens.
\newblock Quantum common causes and quantum causal models.
\newblock {\em Physical Review X}, 7(3):031021, 2017.

\bibitem{barrett2019quantum}
Jonathan Barrett, Robin Lorenz, and Ognyan Oreshkov.
\newblock Quantum causal models.
\newblock {\em arXiv preprint arXiv:1906.10726}, 2019.

\bibitem{fitzsimons2015quantum}
Joseph~F Fitzsimons, Jonathan~A Jones, and Vlatko Vedral.
\newblock Quantum correlations which imply causation.
\newblock {\em Scientific Reports}, 5, 2015.

\bibitem{choi1975completely}
Man-Duen Choi.
\newblock Completely positive linear maps on complex matrices.
\newblock {\em Linear algebra and its applications}, 10(3):285--290, 1975.

\bibitem{plenio2005logarithmic}
Martin~B Plenio.
\newblock Logarithmic negativity: a full entanglement monotone that is not
  convex.
\newblock {\em Physical Review Letters}, 95(9):090503, 2005.

\bibitem{vedral1998entanglement}
Vlatko Vedral and Martin~B Plenio.
\newblock Entanglement measures and purification procedures.
\newblock {\em Physical Review A}, 57(3):1619, 1998.

\bibitem{horsman2017can}
Dominic Horsman, Chris Heunen, Matthew~F Pusey, Jonathan Barrett, and Robert~W
  Spekkens.
\newblock Can a quantum state over time resemble a quantum state at a single
  time?
\newblock In {\em Proc. R. Soc. A}, volume 473, 2017.

\bibitem{zhao2018geometry}
Zhikuan Zhao, Robert Pisarczyk, Jayne Thompson, Mile Gu, Vlatko Vedral, and
  Joseph~F Fitzsimons.
\newblock Geometry of quantum correlations in space-time.
\newblock {\em Physical Review A}, 98(5):052312, 2018.

\bibitem{kretschmann2005quantum}
Dennis Kretschmann and Reinhard~F Werner.
\newblock Quantum channels with memory.
\newblock {\em Physical Review A}, 72(6):062323, 2005.

\bibitem{barnum2000quantum}
Howard Barnum, Emanuel Knill, and Michael~A Nielsen.
\newblock On quantum fidelities and channel capacities.
\newblock {\em IEEE Transactions on Information Theory}, 46(4):1317--1329,
  2000.

\bibitem{bennett1999entanglement}
Charles~H Bennett, Peter~W Shor, John~A Smolin, and Ashish~V Thapliyal.
\newblock Entanglement-assisted classical capacity of noisy quantum channels.
\newblock {\em Physical Review Letters}, 83(15):3081--3084, 1999.

\bibitem{bennett2002entanglement}
Charles~H. Bennett, Peter~W. Shor, John~A. Smolin, and Ashish~V. Thapliyal.
\newblock Entanglement-assisted capacity of a quantum channel and the reverse
  shannon theorem.
\newblock {\em Information Theory, IEEE Transactions on}, 48(10):2637--2655,
  2002.

\bibitem{wilde2018entanglement}
Mark~M Wilde.
\newblock Entanglement cost and quantum channel simulation.
\newblock {\em Physical Review A}, 98(4):042338, 2018.

\bibitem{king2003capacity}
Christopher King.
\newblock The capacity of the quantum depolarizing channel.
\newblock {\em IEEE Transactions on Information Theory}, 49(1):221--229, 2003.

\bibitem{smith2008additive}
Graeme Smith and John~A Smolin.
\newblock Additive extensions of a quantum channel.
\newblock In {\em Information Theory Workshop, 2008. ITW'08. IEEE}, pages
  368--372. IEEE, 2008.

\bibitem{leditzky2018useful}
Felix Leditzky, Nilanjana Datta, and Graeme Smith.
\newblock Useful states and entanglement distillation.
\newblock {\em IEEE Transactions on Information Theory}, 64(7), 2018.

\bibitem{ouyang2014channel}
Yingkai Ouyang.
\newblock Channel covariance, twirling, contraction, and some upper bounds on
  the quantum capacity.
\newblock {\em Quantum Information and Computation}, 14(11):0917--0936, 2014.

\bibitem{Wang}
Private communication with Xin Wang, April 2018.

\bibitem{kitaev2002classical}
Alexei~Yu Kitaev, Alexander Shen, and Mikhail~N Vyalyi.
\newblock {\em Classical and Quantum Computation}, volume~47.
\newblock American Mathematical Soc., 2002.

\bibitem{fuchs1999cryptographic}
Christopher~A Fuchs and Jeroen Van De~Graaf.
\newblock Cryptographic distinguishability measures for quantum-mechanical
  states.
\newblock {\em IEEE Transactions on Information Theory}, 45(4):1216--1227,
  1999.

\bibitem{Sch96}
Benjamin Schumacher.
\newblock {Sending entanglement through noisy quantum channels}.
\newblock {\em Phys. Rev. A}, 54(4):2614--2628, 1996.

\bibitem{KrW04}
Dennis Kretschmann and Reinhard~F Werner.
\newblock {Tema con variazioni: quantum channel capacity}.
\newblock {\em New Journal of Physics}, 6(1):26, 2004.

\bibitem{paulsen2002completely}
Vern Paulsen.
\newblock {\em Completely bounded maps and operator algebras}, volume~78.
\newblock Cambridge University Press, 2002.

\bibitem{horodecki2009quantum}
Ryszard Horodecki, Pawe{\l} Horodecki, Micha{\l} Horodecki, and Karol
  Horodecki.
\newblock Quantum entanglement.
\newblock {\em Reviews of modern physics}, 81(2):865, 2009.

\bibitem{christandl2004squashed}
Matthias Christandl and Andreas Winter.
\newblock “{S}quashed entanglement”: an additive entanglement measure.
\newblock {\em Journal of mathematical physics}, 45(3):829--840, 2004.

\bibitem{leung2010quantum}
Debbie Leung, Jonathan Oppenheim, and Andreas Winter.
\newblock Quantum network communication - the butterfly and beyond.
\newblock {\em IEEE Transactions on Information Theory}, 56(7):3478--3490,
  2010.

\bibitem{hayashi2007quantum}
Masahito Hayashi, Kazuo Iwama, Harumichi Nishimura, Rudy Raymond, and Shigeru
  Yamashita.
\newblock Quantum network coding.
\newblock In {\em Annual Symposium on Theoretical Aspects of Computer Science},
  pages 610--621. Springer, 2007.

\bibitem{laurenza2017general}
Riccardo Laurenza and Stefano Pirandola.
\newblock General bounds for sender-receiver capacities in multipoint quantum
  communications.
\newblock {\em Physical Review A}, 96(3):032318, 2017.

\bibitem{pirandola2016capacities}
Stefano Pirandola.
\newblock Capacities of repeater-assisted quantum communications.
\newblock {\em arXiv preprint arXiv:1601.00966}, 2016.

\bibitem{pant2019routing}
Mihir Pant, Hari Krovi, Don Towsley, Leandros Tassiulas, Liang Jiang, Prithwish
  Basu, Dirk Englund, and Saikat Guha.
\newblock Routing entanglement in the quantum internet.
\newblock {\em npj Quantum Information}, 5(1):25, 2019.

\end{thebibliography}

\appendix
\section{Supplemental Material}
\subsection{Causality under encoding and decoding channels}
An important property that we have used in our proof was that the decoding and encoding procedures do not increase causality, so that $F(R_{\mathcal{M}}) \leq F(R_{\mathcal{N}}^{\otimes n})$. To show this, we first establish the following lemma.
\begin{lemma}
	Let $K$ be a linear map from $k$ qubits to $m$ qubits. Then
	\begin{equation}
	( I  \otimes K ) \textrm{SWAP} ^{\otimes k}	(I \otimes K^{\dagger} ) = (K^{\dagger} \otimes I) \textrm{SWAP} ^{\otimes m}(K \otimes I), \label{swap}
	\end{equation}
	where ($A \otimes B$) means that $A$ and $B$ are applied to the first and second subsystems of each of the $\textrm{SWAP}$s respectively.
\end{lemma}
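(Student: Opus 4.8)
The plan is to reduce both sides to their action on product vectors, where the dimension change caused by the non-square map $K$ can be tracked explicitly. The first observation is that $\textrm{SWAP}^{\otimes k}$, with each single-qubit SWAP pairing one qubit of the first subsystem with the corresponding qubit of the second, is nothing but the full SWAP operator $S_k$ that interchanges the two $2^k$-dimensional registers wholesale; regrouping the qubits so that all first-subsystem qubits precede all second-subsystem ones makes this manifest. Likewise $\textrm{SWAP}^{\otimes m} = S_m$ interchanges two $2^m$-dimensional registers. The defining action is $S_k(\ket{u}\otimes\ket{v}) = \ket{v}\otimes\ket{u}$ for $\ket{u},\ket{v}\in\mathbb{C}^{2^k}$, and analogously for $S_m$, while $K$ is the $2^m\times 2^k$ matrix representing the map and $K^\dagger$ its conjugate transpose.

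Next I would evaluate each side on an arbitrary product vector $\ket{x}\otimes\ket{\psi}$ with $\ket{x}\in\mathbb{C}^{2^k}$ and $\ket{\psi}\in\mathbb{C}^{2^m}$, pushing the operators through one at a time. On the left, $(I\otimes K^\dagger)$ sends this to $\ket{x}\otimes K^\dagger\ket{\psi}$, a vector in $\mathbb{C}^{2^k}\otimes\mathbb{C}^{2^k}$; then $S_k$ swaps the two factors to give $K^\dagger\ket{\psi}\otimes\ket{x}$; and finally $(I\otimes K)$ produces $K^\dagger\ket{\psi}\otimes K\ket{x}$. On the right, $(K\otimes I)$ sends $\ket{x}\otimes\ket{\psi}$ to $K\ket{x}\otimes\ket{\psi}$ in $\mathbb{C}^{2^m}\otimes\mathbb{C}^{2^m}$; then $S_m$ gives $\ket{\psi}\otimes K\ket{x}$; and $(K^\dagger\otimes I)$ yields $K^\dagger\ket{\psi}\otimes K\ket{x}$. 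The two results coincide. Since product vectors span $\mathbb{C}^{2^k}\otimes\mathbb{C}^{2^m}$ and both sides are linear, agreement on all such vectors forces equality of the operators, establishing \eqref{swap}.

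I expect the only delicate point to be the first step: justifying that the interleaved tensor power of single-qubit SWAPs collapses to a single register-level SWAP, and keeping scrupulous track of which copy of $\mathbb{C}^{2^k}$ or $\mathbb{C}^{2^m}$ each factor lives in once the non-square $K$ enters. The usual conjugation identity $S(A\otimes B)S = B\otimes A$ is unavailable here precisely because $K$ changes dimension, so the genuine content of the lemma is that conjugating the small SWAP by $K$ on the output side reproduces conjugation of the large SWAP by $K$ on the input side; the product-vector computation is what makes this transparent. An alternative and fully equivalent route is a direct matrix-element check in a product basis, where both sides reduce to $K_{\alpha q}\overline{K_{\beta p}}$, but the vector computation above is cleaner to present and less error-prone.
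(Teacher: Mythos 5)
Your proof is correct. It shares with the paper the one essential identification — that the interleaved tensor power $\textrm{SWAP}^{\otimes k}$ is, after regrouping qubits, the register-level swap on $\mathbb{C}^{2^k}\otimes\mathbb{C}^{2^k}$; the paper encodes this silently in the representation $\textrm{SWAP}^{\otimes k}=\sum_{u,v}(\ket{u}\otimes\ket{v})(\bra{v}\otimes\bra{u})$, whereas you at least flag it as the step needing justification. Where you genuinely diverge is in the verification itself: the paper expands $K=\sum_{i,j}e_{ij}\ket{j}\bra{i}$ and computes both sides of \eqref{swap} as multi-index sums, concluding with an index relabelling, while you evaluate both sides on an arbitrary product vector $\ket{x}\otimes\ket{\psi}$, obtain $K^{\dagger}\ket{\psi}\otimes K\ket{x}$ each time, and finish by linearity plus the fact that product vectors span $\mathbb{C}^{2^k}\otimes\mathbb{C}^{2^m}$. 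Your dimension bookkeeping is accurate at every step ($I\otimes K^{\dagger}$ lands in $\mathbb{C}^{2^k}\otimes\mathbb{C}^{2^k}$ before the small swap, $K\otimes I$ lands in $\mathbb{C}^{2^m}\otimes\mathbb{C}^{2^m}$ before the large one), which is precisely the content of the lemma given that the usual conjugation identity fails for non-square $K$. Neither route is more general than the other — the matrix-element check you mention as an alternative, with both sides reducing to products of entries of $K$ and $\overline{K}$, is literally the paper's proof — but your basis-free version is shorter, makes the cancellation structurally obvious, and leaves no index manipulations in which an error could hide.
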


\begin{proof}
	Let $K= \sum_{i=0}^{2^k-1} \sum_{j=0}^{2^m-1} e_{ij} \ket{j}\bra{i}.$ Now the tensor product of $k$ qubit $\textrm{SWAP}$s admits a representation
	\[
	\textrm{SWAP}^{\otimes k} = \sum_{u,v=0}^{2^{k}-1} (\ket{u}\otimes \ket{v})(\bra{v}\otimes \bra{u}).
	\]
	Therefore
	\begin{align}
	&( I ^{\otimes k} \otimes K ) \textrm{SWAP} ^{\otimes k}  ( I ^{\otimes k} \otimes K^{\dagger} ) \nonumber\\
	&=\sum_{i,j,i',j',u,v} ( I \otimes \ket{j} \bra{i})\ket{u} \ket{v} \bra{v}\bra{u}( I \otimes \ket{i'} \bra{j'}) e_{ij}  e_{i'j'}^{*}\nonumber\\
	&=\sum_{j,j'=0}^{2^m-1} \sum_{u,v=0}^{2^k-1} \ket{u} \ket{j} \bra{v} \bra{j'} e_{vj} e_{uj'}^* .\nonumber
	\end{align}
	Similarly evaluating the right hand sign of equation \eqref{swap} we get
	\begin{align}
	&(K^{\dagger} \otimes  I^{\otimes m}) \textrm{SWAP} ^{\otimes m} (K \otimes  I^{\otimes m})\nonumber\\
	&=\sum_{i,j,i',j',u,v} ( \ket{i} \bra{j}\otimes  I)\ket{u} \ket{v} \bra{v}\bra{u}(\ket{j'} \bra{i'}\otimes  I) e_{ij} ^{*} e_{i'j'}\nonumber\\
	&=\sum_{i,i'=0}^{2^k-1} \sum_{u,v=0}^{2^n-1} \ket{i} \ket{v} \bra{i'} \bra{u} e_{i'v} e_{iu}^{*} \nonumber\\
	&=\sum_{j,j'=0}^{2^n-1} \sum_{u,v=0}^{2^k-1} \ket{u} \ket{j} \bra{v} \bra{j'} e_{vj} e_{uj'}^{*},
	\end{align}
	where in the last step we have relabelled the indices. 
\end{proof}
We are now in a position to prove that $F(R_{\mathcal{M}}) \leq F(R_{\mathcal{N}}^{\otimes n})$. 
\begin{lemma}
	Let $\mathcal{E}$ and $\mathcal{D}$ be encoding and decoding channels and $\mathcal{M} = \mathcal{D} \circ \mathcal{N}^{\otimes n} \circ \mathcal{E}$. Then
	\[
	\log_2  \|R_{\mathcal M} \|_1 \leq \log_2 \| R_{\mathcal N} ^{\otimes n} \|_1 .
	\]
\end{lemma}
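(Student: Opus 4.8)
The plan is to drop the logarithm---since $\log_2$ is monotone, the claim is equivalent to the multiplicative bound $\|R_{\mathcal M}\|_1 \le \|R_{\mathcal N}^{\otimes n}\|_1$---and to peel the decoding and encoding layers off the definition $R_{\mathcal M} = (\mathcal I^{\otimes k}\otimes\mathcal M)(\mathrm{SWAP}^{\otimes k}/2^k)$, with $\mathcal M = \mathcal D\circ\mathcal N^{\otimes n}\circ\mathcal E$, one at a time. Throughout I use that $\|R_{\mathcal N}^{\otimes n}\|_1 = \|R_{\mathcal N^{\otimes n}}\|_1$, the two differing only by a permutation of tensor factors.

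I would treat the decoder first. Writing $R_{\mathcal M} = (\mathcal I^{\otimes k}\otimes\mathcal D)(R_{\mathcal N^{\otimes n}\circ\mathcal E})$ and dilating $\mathcal D$ to an isometry followed by a partial trace, monotonicity of $\|\cdot\|_1$ under the partial trace (equivalently, contractivity of the CPTP map $\mathcal I\otimes\mathcal D$) gives $\|R_{\mathcal M}\|_1 \le \|R_{\mathcal N^{\otimes n}\circ\mathcal E}\|_1$. This reduces the statement to the encoding inequality $\|R_{\mathcal N^{\otimes n}\circ\mathcal E}\|_1 \le \|R_{\mathcal N^{\otimes n}}\|_1$.

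For the encoder I would exploit the reflection identity \eqref{swap}. Since isometric encodings suffice, set $\mathcal E(\rho) = V\rho V^{\dagger}$ with $V$ an isometry from $k$ to $nl$ qubits. The encoding acts on the system half of the SWAPs, so $(\mathcal I\otimes\mathcal E)(\mathrm{SWAP}^{\otimes k}/2^k) = \frac{1}{2^k}(I\otimes V)\mathrm{SWAP}^{\otimes k}(I\otimes V^{\dagger})$, which \eqref{swap} turns into $\frac{1}{2^k}(V^{\dagger}\otimes I)\mathrm{SWAP}^{\otimes nl}(V\otimes I)$, now an operation on the reference. As $V^{\dagger}\otimes I$ and $V\otimes I$ act on the reference while $\mathcal N^{\otimes n}$ acts on the system, they commute, giving $R_{\mathcal N^{\otimes n}\circ\mathcal E} = \frac{1}{2^k}(V^{\dagger}\otimes I)\big[(\mathcal I\otimes\mathcal N^{\otimes n})(\mathrm{SWAP}^{\otimes nl})\big](V\otimes I)$, where the bracket equals $2^{nl}R_{\mathcal N^{\otimes n}}$.

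The main obstacle is the final estimate $\|(V^{\dagger}\otimes I)R_{\mathcal N^{\otimes n}}(V\otimes I)\|_1 \le 2^{k-nl}\|R_{\mathcal N^{\otimes n}}\|_1$. A crude operator-norm bound only yields the factor $1$ in place of $2^{k-nl}$, which is too weak by the full dimensional factor $2^{nl-k}$; the SWAP structure of $R_{\mathcal N^{\otimes n}}$ must be used to recover it. Reflecting $V$ onto the reference turns the compression into a restriction of the reference to the encoded subspace, and the delicate point is to show that this restriction contracts the trace norm by exactly $2^{k-nl}$. The natural route is to realise the reference compression as a partial trace over the complementary reference register, using $\mathrm{Tr}\,\mathrm{SWAP}^{\otimes(nl-k)} = I$ over that register to supply the missing dimensional factor, and then to apply monotonicity of $\|\cdot\|_1$ under this partial trace. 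Making this last identification watertight---in particular reconciling the pure ancilla that the encoder fixes on the system input against the maximally mixed ancilla that a partial trace over the reference produces---is where essentially all the work lies; once it is in place, taking logarithms and using additivity, $\|R_{\mathcal N}^{\otimes n}\|_1 = \|R_{\mathcal N}\|_1^{n}$, closes the argument.
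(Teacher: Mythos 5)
Your proposal retraces the paper's own proof---decoder removed by contractivity of $\|\cdot\|_1$ under $\mathcal I\otimes\mathcal D$, encoder reflected onto the reference via Lemma~1---but it stops exactly at the step that matters, and so it does not prove the lemma. What is striking is that your bookkeeping up to that point is \emph{more} careful than the paper's. Restoring the normalisations ($R_{\mathcal I^{\otimes k}}$ carries $2^{-k}$ while $R_{\mathcal I^{\otimes nl}}$ carries $2^{-nl}$) one indeed finds
\[
R_{\mathcal N^{\otimes n}\circ\,\mathcal E}
=2^{nl-k}\,(V^{\dagger}\otimes I)\,R_{\mathcal N^{\otimes n}}\,(V\otimes I),
\]
so the estimate still owed is $\|(V^{\dagger}\otimes I)R_{\mathcal N^{\otimes n}}(V\otimes I)\|_1\le 2^{k-nl}\|R_{\mathcal N^{\otimes n}}\|_1$, dimensional factor included. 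The paper never produces this factor: its equality $\|(\mathcal I\otimes(\mathcal N^{\otimes n}\circ\mathcal E))(R_{\mathcal I^{\otimes k}})\|_1=\|(\mathcal E^{\dagger}\otimes\mathcal N^{\otimes n})(R_{\mathcal I^{\otimes m}})\|_1$ is valid for the \emph{unnormalised} SWAP operators (that is the content of Lemma~1) but not for the PDMs, which are normalised by $2^{-k}$ on the left and $2^{-m}$ on the right; and its subsequent positive/negative-part, trace and H\"older argument with the projector $VV^{\dagger}\otimes I$ yields precisely the factor-$1$ bound $\|(V^{\dagger}\otimes I)R_{\mathcal N}^{\otimes n}(V\otimes I)\|_1\le\|R_{\mathcal N}^{\otimes n}\|_1$, i.e.\ the ``crude operator-norm bound'' you rightly reject as too weak. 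So your gap is genuine, but it is the same gap as in the published proof: all that either argument establishes is $\|R_{\mathcal M}\|_1\le 2^{nl-k}\|R_{\mathcal N}^{\otimes n}\|_1$.

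Worse, the gap cannot be filled, because the required estimate---and the lemma itself---is false. Take $n=1$, $k=1$, $l=2$, and let $\mathcal N$ be the two-qubit channel that measures its second input qubit in the computational basis, transmits the first qubit unchanged on outcome $0$, completely depolarises it on outcome $1$, and always resets the second output qubit to $\ket{0}$. Writing $A_1A_2$ for the reference qubits and $B_1B_2$ for the output qubits, its Choi state is
\[
J_{\mathcal N}=\tfrac12\,\ket{0}\bra{0}_{A_2}\otimes\ket{\Phi^{+}}\bra{\Phi^{+}}_{A_1B_1}\otimes\ket{0}\bra{0}_{B_2}
+\tfrac12\,\ket{1}\bra{1}_{A_2}\otimes\tfrac{I_{A_1B_1}}{4}\otimes\ket{0}\bra{0}_{B_2},
\]
and since $R_{\mathcal N}$ is the partial transpose of $J_{\mathcal N}$ over the reference (Supplemental Material), the two $A_2$-blocks remain orthogonal and $\|R_{\mathcal N}\|_1=\tfrac12\|T_{A_1}(\Phi^{+})\|_1+\tfrac12=\tfrac32$. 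Yet with the isometric encoder $\mathcal E(\rho)=\rho\otimes\ket{0}\bra{0}$ and decoder $\mathcal D=\tr_{2}$ one has $\mathcal M=\mathcal D\circ\mathcal N\circ\mathcal E=\mathcal I$, so $\|R_{\mathcal M}\|_1=\tfrac12\|\textrm{SWAP}\|_1=2>\tfrac32$. This also pinpoints why your ``natural route'' cannot be made watertight: tracing out unused reference SWAP factors presents the channel with a \emph{maximally mixed} ancilla, whereas the encoder presents the \emph{pure} ancilla $\ket{0}$---exactly the discrepancy you flagged---and a channel can be perfect on one and useless on the other, which is what this example exploits. That discrepancy is not a technicality to be reconciled; it is a counterexample to the statement (and, incidentally, to the capacity bound the lemma is meant to support, since here $Q(\mathcal N)\ge 1>\log_2\tfrac32=F(R_{\mathcal N})$).
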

\begin{proof}
	Consider the trace norm $\| R_{\mathcal M}\|_1$. The decoding procedure is a local operation and therefore from property 4 of $F(R)$, we have
	\[
	\| R_{\mathcal M}\|_1 \leq \| (\mathcal I \otimes (\mathcal N ^{\otimes n} \circ \mathcal{E})) (R_{\mathcal I ^ {\otimes k}})\|_1.
	\]
	Let $\mathcal{E}$ encode $k$ qubits into $m$ qubits. Using Lemma 1
	\begin{align}
	\| (\mathcal I \otimes (\mathcal N ^{\otimes n} \circ \mathcal{E})) (R_{\mathcal I ^ {\otimes k}})\|_1 &=
	\| (\mathcal{E}^{\dagger} \otimes \mathcal N ^{\otimes n} ) (R_{\mathcal I ^ {\otimes m}})\|_1 \nonumber\\
	&= \| (\mathcal{E}^{\dagger} \otimes \mathcal I ) (R_{\mathcal N} ^ {\otimes n})\|_1. \nonumber
	\end{align}
	Decompose $R_{\mathcal N} ^ {\otimes n}$ into its positive and negative part
	\[
	R_{\mathcal N} ^ {\otimes n} = R_+ - R_-,
	\] 
	where both $R_+$ and $R_-$ are positive semi-definite. By the triangle inequality
	\begin{align}
	&\| (\mathcal{E}^{\dagger} \otimes \mathcal I ) (R_{\mathcal N} ^ {\otimes n})\|_1 \leq \| (\mathcal{E}^{\dagger} \otimes \mathcal I  )( R_+) \|_1 +\| (\mathcal{E}^{\dagger} \otimes \mathcal I  ) (R_-) \|_1 \nonumber \\
	&=\tr ((\mathcal{E}^{\dagger} \otimes \mathcal I )( R_+)) + \tr ((\mathcal{E}^{\dagger} \otimes \mathcal I ) (R_-)) \nonumber \\
	&= \tr((\mathcal{E}^{\dagger} \otimes \mathcal I ) (R_+ + R_-)). \nonumber
	\end{align}
	It has been shown in \cite{barnum2000quantum} that in bounding quantum channel capacity, one can restrict $\mathcal{E}$ to be an isometry with only one non-zero Kraus operator, which we denote by $K$. Then the expression $\tr((\mathcal{E}^{\dagger} \otimes \mathcal I ) (R_+ + R_-))$ can be written as
	\begin{align}
	&\tr((K^{\dagger} \otimes  I  ) (R_+ + R_-)  (K \otimes  I  )) \nonumber \\
	&=\tr ((KK^{\dagger} \otimes  I ) (R_+ + R_-)), \nonumber
	\end{align}
	where we used the cyclic property of the trace. Since $P = KK^{\dagger} \otimes \mathcal I ^{\otimes n}$ is a projector,
	\[
	\tr(P (R_+ + R_-))= \tr (P(R_+ + R_-)P) = \|P(R_+ + R_-)P\|_1.
	\]
	Applying H\"older's inequality twice and making use of the fact the infinity norm of a projector equals one, we get
	\begin{align}
	\|P(R_+ + R_-)P\|_1 &\leq \|P\|_{\infty} \|R_+ + R_-\|_1\|P\|_{\infty} \nonumber \\
	&=\|R_+ + R_-\|_1,\nonumber
	\end{align}
	where $\|\cdot\|_{\infty}$ denotes the infinity norm and is equal to the largest singular value of a matrix . Now, $R_+$ and $R_-$ are orthogonal. Hence
	\begin{align}
	\| R_+ + R_-\|_1 = \| R_+ - R_-\|_1 = \| R_{\mathcal N}^{\otimes n}\|_1, \nonumber
	\end{align}
	which leads to
	$
	\| R_{\mathcal M}\|_1 \leq \| R_{\mathcal N}^{\otimes n}\|_1.
	$
	Finally, since logarithm is a monotonic function, the result follows.
\end{proof}
\subsection{Causality bound against the partial transpose bound}
Let us compare our causality bound to the Holevo and Werner bound.
Given a quantum channel $\mathcal N$, and a transpose map $\mathcal T$, 
the Holevo-Werner upper bound on the quantum capacity is 
\begin{align}
Q_{\mathcal T}(\mathcal N) 
=\log_2  \| \mathcal N  \mathcal T \|_{\diamond} 
=\log_2  \| \mathcal I \otimes \mathcal N \mathcal T \|_1 .\nonumber
\end{align}
Using the definition of the induced norm this can be written as
\[
Q_{\mathcal T}(\mathcal N) = \sup_{\rho} \left( \log_2  \| (\mathcal I \otimes \mathcal N \mathcal T) (\rho)\|_1 \right).
\]
Now we can compare this to our bound. In the case of the maximally mixed input the pseudo-density matrix becomes
\[
R_{\mathcal N}= (\mathcal{I} \otimes \mathcal N ) \left(\frac{\textrm{SWAP}^{\otimes k}}{2^k}\right) = (\mathcal{I} \otimes \mathcal N \mathcal T) (\ket{\Phi^{+}}\bra{\Phi^{+}})^{\otimes k},
\]
and therefore the causality bound becomes
\[
F (R_{\mathcal N})  =\log_2  \| (\mathcal{I} \otimes \mathcal N  \mathcal T)  (\ket{\Phi^{+}}\bra{\Phi^{+}})^{\otimes k} \|_1 .
\]
Comparing this to the Holevo and Werner's result it is clear that
$
F (R_{\mathcal N}) \leq Q_{\mathcal T}(\mathcal N),
$
and the two are equal when the supremum is achieved at the $(\ket{\Phi^+} \bra{\Phi^+})^{\otimes k}$.\\

\subsection{Relation between causality bound and the max-Rains bound}
Given a quantum channel $\mathcal N$, the corresponding max-Rains information $R_{\textrm{max}}(\mathcal N)$ is defined in Ref. \cite{wang2019semidefinite} as 
\[
R_{\textrm{max}}(\mathcal N) := \log_2 \Gamma(\mathcal N),
\]
where $\Gamma(\mathcal N)$ is the solution to 
\[
\min \| \tr_B ( V_{SB}+Y_{SB})\|_\infty
\]
subject to $Y_{SB}, V_{SB} \ge 0$, $T_B(V_{SB}-Y_{SB}) \ge J^{\mathcal N}_{SB}$. $T_B$ denotes a partial transpose over subsystem $B$. Here we prove the following relation between the Rains quantity and the causality bound $F$.
\begin{theorem}
	For any quantum channel $\mathcal N$ with Kraus operators $A_k$,
	\[
	R_{\textrm{max}}(\mathcal N) \leq F(R_{\mathcal N^*}),
	\]
	where $\mathcal N^*(X) := \sum_{k}(A_k^*) X^T (A_k^*)^\dagger.$
\end{theorem}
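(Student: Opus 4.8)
The plan is to bound the minimisation that defines $\Gamma(\mathcal N)$ from above by evaluating its objective at a single, explicitly constructed feasible point, after which $R_{\textrm{max}}(\mathcal N)=\log_2\Gamma(\mathcal N)$ and $F(R_{\mathcal N^*})=\log_2\|R_{\mathcal N^*}\|_1$ are linked simply by taking logarithms. Because the feasible set requires $V_{SB},Y_{SB}\ge 0$ with $T_B(V_{SB}-Y_{SB})\ge J^{\mathcal N}_{SB}$, the natural candidate is to split $R_{\mathcal N^*}$ into its positive and negative parts and feed them into the $T_B$ constraint. The first thing I would establish is the structural identity that ties the two sides together: using the definition $\mathcal N^*(X)=\sum_k A_k^* X^T (A_k^*)^\dagger$ together with the \textrm{SWAP}--Choi correspondence built into Eq.~\eqref{SWAP} (the same index bookkeeping as in Lemma~1), I would show that the partial transpose on $B$ of $R_{\mathcal N^*}$ returns the Choi operator, $T_B(R_{\mathcal N^*})=J^{\mathcal N}_{SB}$. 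The transpose appearing inside $\mathcal N^*$ is exactly what is needed to undo the partial transpose of the Choi operator, so this identity is the linchpin of the argument.

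Granting that identity, I would decompose $R_{\mathcal N^*}$ into its positive and negative parts, $R_{\mathcal N^*}=R_+-R_-$ with $R_+,R_-\ge 0$ orthogonal (as in the proof of Lemma~2), and set $V_{SB}=R_+$, $Y_{SB}=R_-$. Feasibility is then immediate: positivity of $V_{SB},Y_{SB}$ holds by construction, and $T_B(V_{SB}-Y_{SB})=T_B(R_{\mathcal N^*})=J^{\mathcal N}_{SB}$, which meets the constraint $\ge J^{\mathcal N}_{SB}$ with equality. It then remains only to control the objective, which at this point has become $\|\tr_B(V_{SB}+Y_{SB})\|_\infty=\|\tr_B(R_++R_-)\|_\infty=\|\tr_B(|R_{\mathcal N^*}|)\|_\infty$.

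For the closing step I would invoke the fact that the partial trace of a positive semi-definite operator is again positive semi-definite, so its operator norm is at most its total trace: $\|\tr_B(|R_{\mathcal N^*}|)\|_\infty\le \tr(|R_{\mathcal N^*}|)=\|R_{\mathcal N^*}\|_1$. Chaining the feasibility bound with this inequality gives $\Gamma(\mathcal N)\le \|R_{\mathcal N^*}\|_1$, and taking $\log_2$ of both sides yields $R_{\textrm{max}}(\mathcal N)\le F(R_{\mathcal N^*})$, as claimed.

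The routine part is the optimisation skeleton, namely the feasible-point construction plus the partial-trace/operator-norm inequality; both are short and robust. The hard part, and the step I expect to demand the most care, is the identity $T_B(R_{\mathcal N^*})=J^{\mathcal N}_{SB}$. Here one must track scrupulously which subsystem is transposed, distinguish the complex conjugation $A_k^*$ from the transposition $A_k^T=(A_k^*)^\dagger$, and keep the \textrm{SWAP}/Choi normalisation consistent, since a single misplaced full-versus-partial transpose would render $R_{\mathcal N^*}$ positive semi-definite and collapse the bound. Verifying this correspondence cleanly is what makes the otherwise elementary SDP estimate actually produce the stated inequality.
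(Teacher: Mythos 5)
Your proof is correct under the paper's own conventions, but it is not the paper's proof: you argue on the primal side of the SDP, the paper on the dual side. The paper rewrites the objective $\|\tr_B(V_{SB}+Y_{SB})\|_\infty$ by norm duality as a maximisation over $G\ge 0$, $\|G\|_1\le 1$, swaps $\min$ and $\max$ by a minimax argument, uses invariance of the Hilbert--Schmidt inner product under partial transposition, asserts that the inner minimum is attained at $Y_{SB}=0$ and $T_B(V_{SB}-Y_{SB})=J^{\mathcal N}_{SB}$, bounds the result by H\"older to reach $\Gamma(\mathcal N)\le\|T_B(J^{\mathcal N}_{SB})\|_1$, and only then invokes the identity $T_B(J^{\mathcal N}_{SB})=R_{\mathcal N^*}$. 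You instead exhibit a single feasible point: the Jordan decomposition $R_{\mathcal N^*}=R_+-R_-$ gives $(V_{SB},Y_{SB})=(R_+,R_-)$, feasible precisely because of the equivalent identity $T_B(R_{\mathcal N^*})=J^{\mathcal N}_{SB}$, after which $\|\tr_B(R_++R_-)\|_\infty\le\tr(R_++R_-)=\|R_{\mathcal N^*}\|_1$ closes the bound. Your route buys rigour, not just brevity: the paper's claim about where the inner minimum sits requires $V_{SB}=T_B(J^{\mathcal N}_{SB})\ge 0$, i.e.\ a PPT Choi state, which is exactly the degenerate case in which the theorem is trivial; for any channel where the bound has content, that step is unjustified as written. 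A minimisation is always upper bounded by the value at any feasible point, so your argument goes through unconditionally and can fairly be read as the correct repair of the paper's derivation. What the dual/minimax route would buy, if completed carefully, is structural information about when the bound is tight; as a proof of the stated inequality, yours is the cleaner and more defensible one.

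One caveat, which your closing paragraph half-anticipates: the linchpin identity holds only for the transpose-free Kraus-conjugate channel $\mathcal N^*(X)=\sum_k A_k^* X (A_k^*)^\dagger$. If the displayed definition $\mathcal N^*(X)=\sum_k A_k^* X^T (A_k^*)^\dagger$ is taken literally, then $R_{\mathcal N^*}$ is the entrywise-conjugated (normalised) Choi state of $\mathcal N$, hence positive semi-definite with $F(R_{\mathcal N^*})=0$, and one finds $T_B(R_{\mathcal N^*})=R_{\mathcal N}\neq J^{\mathcal N}_{SB}$; the inequality would then fail already for the identity channel. That internal transpose is a typo which the paper's own proof also silently corrects (its step $\mathcal N(X)^T=\mathcal N^*(X^T)$ is the identity valid for the transpose-free definition), so your proof and the paper's stand or fall together on this point. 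When you carry out the ``index bookkeeping'' you flag, make sure you land on the transpose-free reading, and keep $J^{\mathcal N}_{SB}$ normalised to unit trace exactly as in the paper's proof: the identity $T_B(R_{\mathcal N^*})=J^{\mathcal N}_{SB}$ is normalisation-sensitive, and with the unnormalised Choi convention of the max-Rains literature both your argument and the paper's would pick up a dimensional factor.
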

\begin{proof}
	By duality of norms, for any positive semi-definite matrix $M$, we have $\| M\|_{\infty} =  \max \{ \<M,X\> : \| X\|_1 \le 1, X\ge 0 \}$, where $\<M,X\> = \tr M ^\dagger X.$
	Hence it follows that $\Gamma(\mathcal N)$ is the solution to 
	\[
	\min_{Y_{SB},V_{SB} \ge 0 } \max_{G \ge 0 , \|G\| \le 1} \< \tr_B ( V_{SB}+Y_{SB}) , G \>
	\]
	subject to  $T_B(V_{SB}-Y_{SB}) \ge J^{\mathcal N}_{SB}$.
	By the minimax theorem, we know that for any bilinear function $f(x,y)$, 
	$\min_x \max_y f(x,y)= \max_y \min_x f(x,y)$. Therefore, $\Gamma(\mathcal N)$ is the solution to 
	\[
	\max_{G \ge 0 , \|G\| \le 1} \min_{Y_{SB},V_{SB} \ge 0 }  \< \tr_B ( V_{SB}+Y_{SB}) , G \>
	\]
	subject to  $T_B(V_{SB}-Y_{SB}) \ge J^{\mathcal N}_{SB}$.
	Now without loss of generality, 
	$  \< \tr_B ( V_{SB}+Y_{SB}) , G \> =   \<  V_{SB}+Y_{SB} , G \otimes I_B \>$.
	Hence $\Gamma(\mathcal N)$ is the solution to 
	\[
	\max_{G \ge 0 , \|G\| \le 1} \min_{Y_{SB},V_{SB} \ge 0 } \<  V_{SB}+Y_{SB} , G \otimes I_B \>
	\]
	subject to the same condition as above. Since the trace is invariant under transpose of its argument, it is easy to show that Hilbert-Schmidt inner product is invariant under the transpose of both of its arguments. Similarly the Hilbert-Schmidt inner product is also invariant under the partial transpose of both of its arguments. We can prove this by expanding the arguments in any matrix basis, and then applying the linearity of the trace and the multiplicative property of the trace under tensor products. In consequence $\<T_B(C), T_B(D)  \> = \<C,D\>$, and  
	\begin{align}
	&\<  V_{SB}+Y_{SB} , G \otimes I_B \>
	= \< T_B( V_{SB}+Y_{SB}) , T_B(G \otimes I_B) \> \notag\\
	&= \< T_B( V_{SB}-Y_{SB})  +  T_B(2 Y_{SB}) , G \otimes I_B \> . \nonumber
	\end{align}
	Clearly the above is minimized when $Y_{SB} = 0$ and when 
	$T_B( V_{SB}-Y_{SB})  = J^{\mathcal N}_{SB}$. Hence
	$\Gamma(\mathcal N)$ is the solution to 
	\begin{align}
	\max_{G \ge 0 , \|G\| \le 1}   \<  J^{\mathcal N}_{SB} , G \otimes I_B \>
	= 
	\max_{G \ge 0 , \|G\| \le 1}   \< T_B( J^{\mathcal N}_{SB} ) , G \otimes I_B \>,\nonumber
	\end{align} 
	and we find that:
	\[
	\Gamma(\mathcal N) 
	\le 
	\| T_B( J^{\mathcal N}_{SB} )  \|_\infty
	\le 
	\| T_B( J^{\mathcal N}_{SB} )  \|_1.
	\]
	Now let the causality bound for a channel $\mathcal N$ be given by 
	\[
	F(R_{\mathcal N}) = \log_2 C(\mathcal N),
	\]
	where $C(\mathcal N) =  \| (\mathcal I \otimes \mathcal N)({\rm SWAP} ^{\otimes \ell} / 2^\ell)  \|_1$.
	Note that $({\rm SWAP} ^{\otimes \ell} / 2^\ell)$ is just an $\ell$-qubit swap operator. Namely, 
	\begin{align}
	\frac{{\rm SWAP} ^{ \otimes \ell}}{ 2^\ell}
	= \sum_{{\bf x}, {\bf y} \in \{0,1\}^\ell} |{\bf x}\>\<{\bf y}| \otimes |{\bf y}\>\<{\bf x}|,\nonumber
	\end{align}
	where 
	\begin{align}
	|{\bf x}\> = |x_1\> \otimes \dots \otimes |x_\ell\>, \quad
	|{\bf y}\> = |y_1\> \otimes \dots \otimes |y_\ell\>.\nonumber
	\end{align}
	On the other hand, the Choi state of a channel is just the action of $\mathcal I \otimes \mathcal N$ on a maximally entangled state
	\begin{align}
	J^{\mathcal N}_{SB} &= 
	(\mathcal I \otimes \mathcal N)
	\left(\frac{1}{2^{\ell}} \sum_{{\bf x}, {\bf y} \in \{0,1\}^\ell} 
	|{\bf x}\>\<{\bf y}| \otimes |{\bf x}\>\<{\bf y}|
	\right) \notag\\
	& =  
	\frac{1}{2^{\ell}} \sum_{{\bf x}, {\bf y} \in \{0,1\}^\ell} 
	|{\bf x}\>\<{\bf y}| \otimes \mathcal N(|{\bf x}\>\<{\bf y}|).\nonumber
	\end{align}
	Clearly, 
	\begin{align}
	T_B(J^{\mathcal N}_{SB}) & =  
	\frac{1}{2^{\ell}} \sum_{{\bf x}, {\bf y} \in \{0,1\}^\ell} 
	|{\bf x}\>\<{\bf y}| \otimes (\mathcal N(|{\bf x}\>\<{\bf y}|))^T. \nonumber
	\end{align}
	For any argument $X$ of the quantum channel $\mathcal N$ with Kraus operators $A_k$, we have
	\begin{align}
	\mathcal N(X)^T
	= (\sum_{k}A_k X A_k ^\dagger)^T
	&= \sum_{k}A_k^* X^T A_k ^T  \nonumber \\
	&= \sum_{k}(A_k^*) X^T (A_k^*)^\dagger. \nonumber
	\end{align}
	Denote the linear operator $\mathcal N^*$ to be the conjugate channel of $\mathcal N$ where
	\[
	\mathcal N^*(X) = \sum_{k}(A_k^*) X^T (A_k^*)^\dagger.
	\]
	Then it follows that $\mathcal N(X)^T = \mathcal N^*(X^T)$. Hence,
	\begin{align}
	T_B(J^{\mathcal N}_{SB}) & =  
	\frac{1}{2^{\ell}} \sum_{{\bf x}, {\bf y} \in \{0,1\}^\ell} 
	|{\bf x}\>\<{\bf y}| \otimes \mathcal N^*(|{\bf y}\>\<{\bf x}|)
	\notag\\
	& =  
	(\mathcal I \otimes \mathcal N^*)\left(
	\frac{1}{2^{\ell}} \sum_{{\bf x}, {\bf y} \in \{0,1\}^\ell} 
	|{\bf x}\>\<{\bf y}| \otimes \mathcal |{\bf y}\>\<{\bf x}| 
	\right) \notag\\
	& =  
	(\mathcal I \otimes \mathcal N^*)\left(
	\frac{{\rm SWAP}^{\otimes \ell}}{2^{\ell}} \right). \notag
	\end{align}
	and we recover the causality bound
	\[
	\|T_B(J^{\mathcal N}_{SB}) \|_1  = C(\mathcal N^*),
	\]
	It follows that 
	\[
	\Gamma(\mathcal N) \le C(\mathcal N^*).
	\]
\end{proof}
Notice that when $\mathcal N^* = \mathcal N$, then the causality bound is an upper bound on the Rains bound. 
Regardless of whether $\mathcal N^*$ is a quantum channel, the Rains bound for $\mathcal N$ is always upper bounded by the causality bound for the conjugate channel $\mathcal N^*$.

\subsection{Limit of infinite uses of a channel}
Here we show that in the proof of the causality bound, the error parameter $\epsilon$ goes to zero in the limit of large $n$. Now,
\begin{align}
\epsilon&=
\| {\mathcal I} ^{\otimes k} \otimes  (\mathcal{M} -  {\mathcal I}^{\otimes k}) \|_1 \nonumber
\\
&=\sup_{\| X\|_1 = 1}
\|( {\mathcal I} ^{\otimes k} \otimes  (\mathcal{M} -  {\mathcal I}^{\otimes k})) (X) \|_1 \nonumber.
\end{align}
Consider the spectral decomposition of Hermitian operator 
\begin{align}
X = \sum_{i} \lambda_i |\psi_i\>\<\psi_i|\nonumber,
\end{align}
where $\{|\psi_i\>\}_i$ denotes an orthonormal basis, and $\lambda_i$ are the corresponding eigenvalues.
Let $\mathcal A = ( {\mathcal I} ^{\otimes k} \otimes  (\mathcal{M} -  {\mathcal I}^{\otimes k}))$,
then we have
\begin{align}
\epsilon
&= 
\sup_{\{|\psi_i\>\}_i, \sum_{i}|\lambda_i|=1}
\left\|\mathcal A \left( \sum_{i} \lambda_i |\psi_i\>\<\psi_i| \right) \right\|_1 \notag\\
&\le
\sup_{\{|\psi_i\>\}_i, \sum_{i}|\lambda_i|=1} \left( \sum_{i} |\lambda_i|
\|\mathcal A ( |\psi_i\>\<\psi_i| ) \|_1 \right) \notag\\
&\le 
\sup_{ |\psi\> }\|\mathcal A ( |\psi\>\<\psi| ) \|_1 .\nonumber
\end{align}
Since $\mathcal A$ is the difference of two linear maps 
${\mathcal I} ^{\otimes k} \otimes {\mathcal I} ^{\otimes k}$ and 
${\mathcal I} ^{\otimes k} \otimes \mathcal M$, by linearity
we have
\begin{align}
&\sup_{ |\psi\> }\|\mathcal A ( |\psi\>\<\psi| ) \|_1 \nonumber \\
&=
\sup_{ |\psi\> }\|
( {\mathcal I} ^{\otimes k} \otimes {\mathcal I} ^{\otimes k})( |\psi\>\<\psi| )
-
({\mathcal I} ^{\otimes k} \otimes \mathcal M) ( |\psi\>\<\psi| ) \|_1. \nonumber
\end{align}
Within the supremum, we have 1-norm of the difference between two quantum states. 
Recall that there is the inequality that relates the 1-norm of the difference between quantum states to the fidelity between the states. Let
\begin{align}
f(\rho,\sigma) = \tr \sqrt{ \sqrt \rho \sigma \sqrt \rho} \nonumber
\end{align}
denote the fidelity between two positive semidefinite matrices.
If $\rho=|\psi\>\<\psi|$, then $f(\rho,\sigma)= \sqrt{\<\psi|\sigma|\psi\>}.$
Then we have the
Fuchs-van de Graaf inequalities \cite{fuchs1999cryptographic}
\begin{align}
1-f(\rho,\sigma)
\le 
\frac{1}{2}\| \rho - \sigma \|_1
\le
\sqrt{1-f(\rho,\sigma)^2}.\nonumber
\end{align}
Hence
\begin{align} 
&\frac 1 2 \sup_{ |\psi\> }\|\mathcal A ( |\psi\>\<\psi| ) \|_1 \le \nonumber\\
&\sqrt{1- \inf_{ |\psi\> }
	f( ( {\mathcal I} ^{\otimes k} \otimes {\mathcal I} ^{\otimes k})( |\psi\>\<\psi| )  ,
	({\mathcal I} ^{\otimes k} \otimes \mathcal M) ( |\psi\>\<\psi| )   )^2}.\nonumber
\end{align} 
The above inequality is related to entanglement fidelity $F_e(\rho,\Phi)$ of a state $\rho$ with respect to the channel $\Phi$.
Let $\Phi(\rho) = \sum_{A \in K} A \rho A ^\dagger$.
Then from Schumacher's formula \cite{Sch96}, we have
\begin{align}
F_e(\rho,\Phi) &= \<\phi | (\Phi \otimes \mathcal I) (|\phi\>\<\phi| )|\phi \>\nonumber\\
&= f( (\Phi \otimes \mathcal I) (|\phi\>\<\phi| ), |\phi\>\<\phi|) \nonumber\\
&=    \sum_{A \in K} |\tr \rho A|^2,\nonumber
\end{align}
where $|\phi\>$ is a purification of $\rho$.
We denote 
\begin{align}
F_e(\Phi) &=
\inf_\rho F_e(\rho,\Phi)\nonumber\\
&=\inf_{|\phi\>} \<\phi |( \Phi \otimes\mathcal I) (|\phi\>\<\phi| )|\phi \> \nonumber\\
&= 
\inf_{|\phi\>} f( |\phi\>\<\phi|,   (\Phi \otimes \mathcal I) (|\phi\>\<\phi| ) )^2.\nonumber
\end{align}
Hence using the notation for the entanglement fidelity, we have 
\begin{align}
\frac 1 2 \sup_{ |\psi\> }\|\mathcal A ( |\psi\>\<\psi| ) \|_1
\le
\sqrt{1- F_e(  \mathcal M) },\nonumber
\end{align}
thus
$
\epsilon \le 2 \sqrt{1- F_e(  \mathcal M) }.
$
Kretschmann and Werner \cite[Proposition 4.3]{KrW04} showed that
\begin{align}
1- F_e(\Phi) 
\le
4 \sqrt {\|\Phi- \mathcal I \|_{\rm cb}} 
\le 
8 \left( 1- F_e(\Phi)  \right)^{1/4} ,\nonumber
\end{align}
where $\| \cdot \|_{\rm cb}$ denotes the completely bounded norm induced on the operator infinity norm \cite{paulsen2002completely}.
Therefore  
\begin{align}
\epsilon 
\le 
2\sqrt{4 \sqrt {\|\mathcal M- \mathcal I\|_{\rm cb}} }
=
4 \|\mathcal M- \mathcal I\|_{\rm cb}^{1/4}.\nonumber
\end{align}
Since $ \|\mathcal M- \mathcal I\|_{\rm cb}$ is guaranteed to approach zero as $n$ approaches infinity in the channel capacity theorems, $\epsilon$ here also approaches zero.

\end{document}